\titleformat*{\section}{\large\bfseries}
\newtheorem{theorem}{Theorem}[section]
\newtheorem*{theorem*}{Theorem}
\newtheorem{proposition}[theorem]{Proposition}
\newtheorem{lemma}[theorem]{Lemma}
\newtheorem{definition}[theorem]{Definition}
\newtheorem*{problem*}{Problem}
\newtheorem{remark}[theorem]{Remark}
\DeclareMathAccent{\wtilde}{\mathord}{largesymbols}{"65}
\DeclareMathOperator{\supp}{\text{supp}}
\newcommand{\N}{\mathbb{N}}
\newcommand{\R}{\mathbb{R}}
\newcommand{\cP}{\mathcal{P}}
\newcommand{\lv}{\lvert}
\LetLtxMacro\orgvdots\vdots
\LetLtxMacro\orgddots\ddots
\DeclareRobustCommand\vdots{%
  \mathpalette\@vdots{}%
}
\newcommand*{\@vdots}[2]{%
  % #1: math style
  % #2: unused
  \sbox0{$#1\cdotp\cdotp\cdotp\m@th$}%
  \sbox2{$#1.\m@th$}%
  \vbox{%
    \dimen@=\wd0 %
    \advance\dimen@ -3\ht2 %
    \kern.5\dimen@
    % remove side bearings
    \dimen@=\wd2 %
    \advance\dimen@ -\ht2 %
    \dimen2=\wd0 %
    \advance\dimen2 -\dimen@
    \vbox to \dimen2{%
      \offinterlineskip
      \copy2 \vfill\copy2 \vfill\copy2 %
    }%
  }%
}
\DeclareRobustCommand\ddots{%
  \mathinner{%
    \mathpalette\@ddots{}%
    \mkern\thinmuskip
  }%
}
\newcommand*{\@ddots}[2]{%
  % #1: math style
  % #2: unused
  \sbox0{$#1\cdotp\cdotp\cdotp\m@th$}%
  \sbox2{$#1.\m@th$}%
  \vbox{%
    \dimen@=\wd0 %
    \advance\dimen@ -3\ht2 %
    \kern.5\dimen@
    % remove side bearings
    \dimen@=\wd2 %
    \advance\dimen@ -\ht2 %
    \dimen2=\wd0 %
    \advance\dimen2 -\dimen@
    \vbox to \dimen2{%
      \offinterlineskip
      \hbox{$#1\mathpunct{.}\m@th$}%
      \vfill
      \hbox{$#1\mathpunct{\kern\wd2}\mathpunct{.}\m@th$}%
      \vfill
      \hbox{$#1\mathpunct{\kern\wd2}\mathpunct{\kern\wd2}\mathpunct{.}\m@th$}%
    }%
  }%
}
\author{Asger Kj\ae rulff Jensen}
\affil{QMATH, Department of Mathematical Sciences, University of Copenhagen, Universitetsparken 5, 2100 Copenhagen, Denmark}
\title{Asymptotic majorizaiton of finite probability distributions}
\begin{document}
\maketitle
%\vspace*{1em}
%\begin{center}
%\Large\textbf{The spectrum of asymptotic LOCC conversion}\par
%\vspace{1em}
%\large Asger Kjærulff Jensen, P\'eter Vrana\par
%\end{center}
%\vspace{0.5em}

\begin{abstract}
This paper studies majorization of high tensor powers of finitely supported probability distributions. Viewing probability distributions as a resource with majorization as a means of transformation corresponds to the resource theory of pure bipartite quantum states under LOCC transformations vis-à-vis Nielsen's Theorem \cite{PhysRevLett.83.436}. In \cite[Example 8.26]{fritz_2017} a formula for the asymptotic exchange rate between any two finitely supported probability distributions was conjectured. The main result of the present paper is Theorem \ref{th1}, which resolves this conjecture.
\end{abstract}
\section{Introduction}

Majorization of probability distributions is an important notion in the field of information theory. Given probability distributions $P$ and $Q$, we ask whether $P^{\otimes n}\preceq Q^{\otimes n}$ for large $n$, and we ask how large $r\in \R$ is allowed to be for $P^{\otimes n}\preceq Q^{\otimes \lfloor n r\rfloor}$ to be true for large $n$. We denote the supremum of such $r$ by $R(P,Q)$. This question is of particular interest to the author as it relates to LOCC transformation of bipartite pure quantum states. In this context $R(P,Q)$ is the optimal rate by which one can extract copies of the bipartite state with Schmidt coefficients $Q$ from copies of the bipartite state with Schmidt coefficients $P$. In Theorem \ref{th1} we show that
\begin{equation}
	R(P,Q) = \min_{\alpha\in [0,\infty]} \frac{H_\alpha(P)}{H_\alpha(Q)}.
\end{equation}
The main tool for obtaining this formula is a description of the growth exponents defined in \ref{def1}. This description is found in Proposition \ref{HayashiLemma}. The quantum information reader might note the resemblance with the well known entanglement manipulation theorem \cite[ch. 19.4]{Nielsen:2011:QCQ:1972505}, which states that the exchange rate, when one allows for non-exact LOCC transformations while demanding that fidelity goes to 1, is given by the ratio of the Shannon entropies (i.e. $\alpha=1$). In the quantum resource theory interpretation of present paper we demand that the transformation is exact.
\\\\

This paper started out as a section in \cite{2018arXiv180705130K}. As the section grew larger and we were made aware of the conjecture in \cite[Example 8.26]{fritz_2017}, it was decided to write a seperate paper proving this conjecture, while developing the necessary techniques properly.
\section{Asymptotic exponents}

Given a probability distribution $P:X\to [0,1]$ with finite support $\left|\supp(P)\right|=d$, we let $P^\downarrow:[d]=\{1,\ldots ,d\} \to [0,1]$ be $P$ ordered non-increasingly. We may naturally extend $P^\downarrow:\N\to [0,1]$ by $P(i)=0$ for $i>d$. In this paper all probability distributions will have finite support.
\begin{definition}
	Given two probability distributions $P,Q$, we say that $Q$ majorizes $P$, written $P\preceq Q$, if
	\begin{equation}
		\sum_{i=1}^N P^\downarrow(i) \le \sum_{i=1}^N Q^\downarrow(i)
	\end{equation}
	for all $N\in \N$.
\end{definition}
\begin{definition}
	Given a probability distribution $P$ with $\supp(P)=X$ and an $\alpha\in [0,\infty)\backslash \{1\}$, the Rényi $\alpha$-entropy is given by
	\begin{equation}
		H_{\alpha}(P)=\frac{1}{1-\alpha}\log \sum_{x\in X} P(x)^\alpha.
	\end{equation}
	For $\alpha\in \{1,\infty\}$, $H_\alpha(P)$ is defined by taking the limit $\lim_{\beta\to \alpha} H_\beta(P)$, that is
	\begin{equation}
		H_1(P)=H(P)=-\sum_{x\in X} P(x)\log P(x)
	\end{equation}
	\begin{equation}
		H_\infty(P) = -\max_{x\in X} \log P(x).
	\end{equation}
\end{definition}
\begin{definition}
	Let $P,Q:X\to [0,1]$ be two probability distributions with $\supp(Q)\subset \supp(P)$. The relative entropy, also known as the Kullback-Leibler divergence, is defined as
	\begin{equation}
		D(Q||P) = 
		\sum_{\supp(Q)} Q(i)\log \left( \frac{Q(i)}{P(i)} \right).
	\end{equation}
	Note that the relative entropy is always non-negative.
\end{definition}
For $n\in\N$, $P^{\otimes n}:X^n\to [0,1]$ is the $n$'th product distribution given by $P^{\otimes n}(I)=\prod_{j=1}^n P(I_j)$. We wish to study majorization of $P^{\otimes n}$ by $Q^{\otimes n}$ for large $n$. To this end, given a value $v$, we are interested in the size of the set of multiindicies $I$, such that $P^{\otimes n}(I)\ge v$ and the sum of these probabilities. In order to asymptotically compare these for different probability distributions, it is useful to let $v$ depend exponentially on $n$ and look at asymptotic growth rates.
\begin{definition}\label{def1}
	For $V\in \left[\log P(d),\log P(1)\right]$ let
	%\begin{equation}
	%M(V)=\lim_{n\to \infty} \frac{1}{n}\log\sum_{\stackrel{I\in [d]^n}{p_I\ge 2^{nV}}} p_I.
	%\end{equation}
	\begin{equation}
	m_n^P(V)=\sum_{\stackrel{I\in [d]^n}{P^{\otimes n}(I)\ge 2^{nV}}} P^{\otimes n}(I),
	\end{equation}
	\begin{equation}
	m_{n*}^P(V)=\sum_{\stackrel{I\in [d]^n}{P^{\otimes n}(I)\le 2^{nV}}} P^{\otimes n}(I),
	\end{equation}
	\begin{equation}
	s_n^P(V)=\left| \{I\in [d]^n\lv P^{\otimes n}(I)\ge 2^{nV} \}\right|,
	\end{equation}
	\begin{equation}
	s_{n*}^P(V)=\left| \{I\in [d]^n\lv P^{\otimes n}(I)\le 2^{nV} \}\right|.
	\end{equation}
	We define asymptotic exponents of these functions as follows:
	\begin{equation}\label{eq10}
	M^P(V)=\lim_{n\to \infty}\frac{1}{n}\log m_n^P(V),
	\end{equation}
	\begin{equation}\label{eq11}
	M_*^P(V)=\lim_{n\to \infty}\frac{1}{n}\log m_{n*}^P(V),
	\end{equation}
	\begin{equation}\label{eq12}
	S^P(V)=\lim_{n\to \infty}\frac{1}{n}\log s_n^P(V),
	\end{equation}
	\begin{equation}\label{eq13}
	S_*^P(V)=\lim_{n\to \infty}\frac{1}{n}\log s_{n*}^P(V).
	\end{equation}
\end{definition}
It is not immediately clear that the limits describing $M^P,M^P_*,S^P$ and $S_*^P$ are well defined. This will follow from Proposition \ref{HayashiLemma}. The letters chosen, stand for value, mass and size. $V,M^P$ and $S^P$ might be called the value, mass and size exponents, respectively. $M_*^P$ and $S_*^P$ might then be called the converse mass and size exponents. For the purpose of proving Proposition \ref{HayashiLemma}, we need Lemma \ref{concavityLemma}. It should be said that Proposition \ref{HayashiLemma} has been extracted from \cite{MR1960075}, and should merely be viewed as a concise summary and slight extension of some of the tools presented in that paper.
\begin{lemma}\label{concavityLemma}
	Let $X\subset \R^n$ be a compact, convex set. Let $g:X\to \R$ be continuous and $h:X\to \R$ be continuous and strictly concave. Suppose $h$ takes its maximum value at $x_2\in X$. 
	\\
	If $g$ takes its minimum value at $x_1\in X$, then
	\begin{equation}
	y\mapsto \max_{x:g(x)=y}h(x)\quad y\in [g(x_1),g(x_2)]
	\end{equation}
	is strictly monotone increasing.
	\\
	If $g$ takes its maximum value at $x_1\in X$. Then
	\begin{equation}
	y\mapsto \max_{x:g(x)=y}h(x)\quad y\in [g(x_2),g(x_1)]
	\end{equation}
	is strictly monotone decreasing.
\end{lemma}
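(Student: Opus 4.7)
The plan is to first observe that, for every $y$ in the stated range, the level set $\{x\in X : g(x)=y\}$ is nonempty and compact, so $f(y):=\max_{x:g(x)=y}h(x)$ is well-defined. Nonemptiness follows from the intermediate value theorem applied to $g$ restricted to the segment $[x_1,x_2]\subset X$ (using convexity of $X$ and the fact that $y$ lies between $g(x_1)$ and $g(x_2)$). Compactness is automatic, being a closed subset of the compact set $X$, and then continuity of $h$ delivers the maximum.

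For the strictly increasing statement, I would fix $y<y'$ in $[g(x_1),g(x_2)]$, pick a maximizer $x_y$ with $g(x_y)=y$ and $h(x_y)=f(y)$, and study the segment $x_t:=(1-t)x_y+tx_2$ for $t\in[0,1]$, which lies in $X$ by convexity. Since $g(x_y)=y\le y'\le g(x_2)$, continuity of $g$ along this segment gives some $t^\ast\in(0,1]$ with $g(x_{t^\ast})=y'$. If $t^\ast<1$, strict concavity of $h$ together with $h(x_2)\ge h(x_y)$ yields
\begin{equation}
f(y')\;\ge\;h(x_{t^\ast})\;>\;(1-t^\ast)h(x_y)+t^\ast h(x_2)\;\ge\;h(x_y)\;=\;f(y),
\end{equation}
which is the desired strict inequality. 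The case $t^\ast=1$, i.e.\ $y'=g(x_2)$, forces $x_{t^\ast}=x_2$, and the conclusion becomes $f(y')=h(x_2)>h(x_y)=f(y)$.

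The remaining subtlety, and what I expect to be the main obstacle, is closing the edge case $t^\ast=1$: the inequality $h(x_2)>h(x_y)$ needs strictness even though $x_2$ is merely stated to be a maximizer. For this I would invoke the standard fact that strict concavity implies uniqueness of the maximizer (if there were a second maximizer $x_2'\ne x_2$, strict concavity applied at their midpoint would give a value larger than $h(x_2)$, a contradiction). Since $g(x_y)=y<y'=g(x_2)$ forces $x_y\ne x_2$, uniqueness yields $h(x_2)>h(x_y)$ and closes the case.

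The strictly decreasing statement is handled by a symmetric argument: now $g(x_2)\le y<y'\le g(x_1)$, and for a maximizer $x_{y'}$ at level $y'$ I would traverse the segment from $x_2$ to $x_{y'}$ to hit some $x_{t^\ast}$ with $g(x_{t^\ast})=y$, and then the same strict-concavity/uniqueness argument forces $f(y)\ge h(x_{t^\ast})>h(x_{y'})=f(y')$.
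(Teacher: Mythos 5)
Your proof is correct and follows essentially the same route as the paper's: pick a maximizer on the lower level set, move along the segment toward the global maximizer $x_2$ of $h$, hit the higher level by the intermediate value theorem, and use strict concavity together with $h(x_2)\ge h(\cdot)$ to force a strict increase (the paper handles the decreasing case by replacing $g$ with $-g$, which is the same as your mirrored argument). Your explicit treatment of the endpoint case $t^\ast=1$ via uniqueness of the maximizer of a strictly concave function is a welcome touch of care that the paper's inequality chain leaves implicit.
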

\begin{proof}
	Assume that $g$ takes its minimum value at $x_1$. Let $g(x_1)\le y'<y''\le g(x_2)$. Let $x'\in g^{-1}\left(y'\right)$ such that $\max_{x:g(x)=y'}h(x)=h(x')$. By continuity of $g$ we may find $x''$ on the line segment between $x'$ and $x_2$, such that $g(x'')=y''$. That is
	\begin{equation}
	x''=\lambda x_2+(1-\lambda)x'
	\end{equation}
	for some $\lambda\in (0,1]$. Since $h$ is strictly concave
	\begin{equation}
	h(x'') \ge \lambda h(x_2)+(1-\lambda)h(x')>h(x').
	\end{equation}
	So
	\begin{equation}
	\max_{x:g(x)=y'}h(x)=h(x') < h(x'') \le \max_{x:g(x)=y''}h(x).
	\end{equation}
	The second part of the lemma follows from the first by replacing $g$ with $-g$.
\end{proof}
Given a probability distribution $P$ with support $[d]$, we let
\begin{equation}
	F_P(\alpha) = \log \sum P(i)^\alpha.
\end{equation}
In order to make things simpler, we shall only consider $F_P$ for probability distributions that are non-uniform (such that $F_P$ is strictly convex) and ordered non-increasingly (such that we may simply write $P(1)$ instead of $\max_{x\in X} P(x)$ and $P(d)$ instead of $\min_{x\in X} P(x)$).
\\\\
The function $F_P$ will be central to the rest of the paper. Note that
\begin{equation}
F_P'(\alpha) = \frac{\sum P(i)^\alpha \log P(i)}{\sum P(i)^\alpha}
\end{equation}
is negative and monotone increasing $F_P':\R\to \left( \log P(d), \log P(1) \right)$. We shall define 
\begin{equation} 
F_P'(\infty)=\lim_{\alpha\to \infty} F_P'(\alpha)=\log P(1)
\end{equation} 
and 
\begin{equation} 
F_P'(-\infty)=\lim_{\alpha\to -\infty} F_P'(\alpha)=\log P(d).
\end{equation} 
$F_P$ is decreasing and strictly convex. Two important values to keep in mind are
\begin{equation}
\begin{split}
	F(0)&= H_0(P) = \log d,
	\\
	F(1)&= 0.
\end{split}
\end{equation}
Also note the following bijections
\begin{equation}
	F_P'
	\begin{cases}
	[-\infty,0]\quad & \longleftrightarrow \quad\left[\log P(d), \frac{\sum_i\log P(i)}{d} \right], \\
	\left[0,1\right]\quad & \longleftrightarrow \quad\left[\frac{\sum_i\log P(i)}{d}, -H(P) \right], \\
	\left[1,\infty \right]\quad & \longleftrightarrow \quad\left[-H(P), \log P(1) \right].
	\end{cases}
\end{equation}
We are now ready to give explicit formulas for the exponent functions (\ref{eq10}),(\ref{eq11}),(\ref{eq12}),(\ref{eq13}).
\begin{proposition}\label{HayashiLemma}
	Let $P$ be a non-uniform probability distribution with $\supp(P)=[d]$ which is ordered non-increasingly. For $V\in \left[ -\log P(d),\log P(1)\right]$ let $\alpha_V\in [-\infty,\infty]$ be the unique solution to $F_P'(\alpha)=V$, then
	\begin{equation}\label{equation16}
	M^P(V) =
	\left\{
	\begin{array}{rl}
	0 & \text{if } V\in \left[-\log P(d),-H(P)\right],\\
	F_P(\alpha_V)+(1-\alpha_V)F_P'(\alpha_V) & \text{if } V \in \left[-H(P),\log P(1)\right].
	\end{array} \right.
	\end{equation}
	\begin{equation}\label{equation17}
	M_*^P(V) =
	\left\{
	\begin{array}{rl}
	F_P(\alpha_V)+(1-\alpha_V)F_P'(\alpha_V) & \text{if } V\in \left[-\log P(d),-H(P)\right],\\
	0 & \text{if } V \in \left[-H(P),\log P(1)\right].
	\end{array} \right.
	\end{equation}
	\begin{equation}\label{equation18}
	S^P(V) =
	\left\{
	\begin{array}{rl}
	\log d & \text{if } V\in \left[-\log P(d),\frac{\sum \log P(i)}{d}\right],\\
	F_P(\alpha_V)-\alpha_V F_P'(\alpha_V) & \text{if } V \in \left[\frac{\sum \log P(i)}{d},\log P(1)\right].
	\end{array} \right.
	\end{equation}
	\begin{equation}\label{equation19}
	S_*^P(V) =
	\left\{
	\begin{array}{rl}
	F_P(\alpha_V)-\alpha_V F_P'(\alpha_V) & \text{if } V\in \left[-\log P(d),\frac{\sum \log P(i)}{d}\right],\\
	\log d & \text{if } V \in \left[\frac{\sum \log P(i)}{d},\log P(1)\right].
	\end{array} \right.
	\end{equation}
	Whenever $\alpha_V=\pm\infty$ the above formulas are to be interpreted as the limit $\alpha\to \pm\infty$.
\end{proposition}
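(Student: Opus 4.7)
The plan is to apply the method of types and reduce each asymptotic exponent to a continuous convex optimization problem on the probability simplex, which I then solve explicitly via Lagrange multipliers.

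First I would reorganize the sums defining \eqref{eq10}--\eqref{eq13} by grouping sequences $I\in [d]^n$ according to their empirical type $T_I$, where $T_I(i)$ is the fraction of entries of $I$ equal to $i$. A direct computation gives
\begin{equation*}
 P^{\otimes n}(I) = 2^{-n\left(H(T_I)+D(T_I\|P)\right)},
\end{equation*}
so the defining inequality $P^{\otimes n}(I)\ge 2^{nV}$ is equivalent to the \emph{linear} constraint $\sum_i T_I(i)\log P(i)\ge V$ on the empirical type. Combining this with the standard bound that the number of sequences of type $T$ lies in $\left[2^{nH(T)}/(n+1)^d,\, 2^{nH(T)}\right]$, together with the fact that there are only polynomially many types in $[d]^n$, the quantities $s_n^P(V), s_{n*}^P(V)$ concentrate up to $\mathrm{poly}(n)$ on the largest $2^{nH(T)}$ over admissible types, and $m_n^P(V), m_{n*}^P(V)$ on the largest $2^{-nD(T\|P)}$. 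Since the rational-valued types become dense in the simplex as $n\to\infty$, the limits in \eqref{eq10}--\eqref{eq13} exist and equal
\begin{align*}
 S^P(V) &= \max\left\{H(T) \,:\, \textstyle\sum_i T(i)\log P(i)\ge V\right\},\\
 M^P(V) &= -\min\left\{D(T\|P) \,:\, \textstyle\sum_i T(i)\log P(i)\ge V\right\},
\end{align*}
with the inequality reversed for $S_*^P, M_*^P$; the extrema being taken over probability distributions $T$ on $[d]$.

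Next I would analyze these convex programs. The linear functional $g(T)=\sum_i T(i)\log P(i)$ serves as the constraint, while $h_1(T)=H(T)$ and $h_2(T)=-D(T\|P)$ are both strictly concave. The unconstrained maximizer of $h_1$ is uniform with $g$-value $\sum_i\log P(i)/d$, while the unconstrained maximizer of $h_2$ is $T=P$ with $g$-value $-H(P)$. Whenever $V$ lies on the inactive side of the unconstrained optimum, the optimum is attained there, giving the constant branches $\log d$ in \eqref{equation18}, \eqref{equation19} and $0$ in \eqref{equation16}, \eqref{equation17}. In the complementary range the constraint is active and the Lagrange conditions force the critical point into the one-parameter family of tilted distributions $T_\alpha(i) = P(i)^\alpha / 2^{F_P(\alpha)}$, for which
\begin{equation*}
 \sum_i T_\alpha(i)\log P(i) = F_P'(\alpha), \quad H(T_\alpha)=F_P(\alpha)-\alpha F_P'(\alpha), \quad D(T_\alpha\|P) = (\alpha-1)F_P'(\alpha) - F_P(\alpha).
\end{equation*}
Matching the active constraint $F_P'(\alpha)=V$ identifies $\alpha=\alpha_V$, and substitution yields the second branches of \eqref{equation16}--\eqref{equation19} directly. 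Uniqueness of the tilted optimizer and strict monotonicity of the exponent functions in $V$ on the active-constraint range is the content of Lemma \ref{concavityLemma} applied to the pairs $(g,h_1)$ and $(g,h_2)$.

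The main obstacle is the first step: controlling the polynomial type-counting errors so that they disappear in the exponential limit, and handling the boundary cases $\alpha_V=\pm\infty$, where the tilted distribution degenerates onto $\{i:P(i)=P(1)\}$ or $\{i:P(i)=P(d)\}$. There the claimed formula has to be checked by taking $\alpha\to\pm\infty$ in the closed-form expressions and comparing with the direct estimates on $s_n^P, m_n^P$ at the extreme values of $V$, which essentially count only maximum- and minimum-probability sequences. Once these asymptotic reductions are in place, the Lagrange-multiplier calculation together with Lemma \ref{concavityLemma} yields the proposition.
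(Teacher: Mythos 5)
Your proposal is correct and follows essentially the same route as the paper: reduce the exponents via the method of types to constrained optimizations of $H(T)$ and $-D(T\|P)$ over the simplex with the linear constraint $\sum_i T(i)\log P(i)=V$, use Lemma \ref{concavityLemma} to handle the inequality-versus-equality constraint and the constant branches, and identify the optimizer as the tilted family $P_\alpha$ with $F_P'(\alpha)=V$. The only cosmetic difference is that you certify optimality of the tilted distribution by Lagrange multipliers plus strict concavity, whereas the paper does it by the direct identity $H(P_\alpha)-H(Q)=D(Q\|P_\alpha)\ge 0$.
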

\begin{proof}
	Let $\cP([d])$ be the set of probability distributions on $[d]$.
	The map $h:Q\mapsto H(Q)$ is concave on $\cP([d])$ and takes its maximum value at the uniform distribution, where $-H(Q)-D(Q||P)=\frac{\sum\log P(i)}{d}$. The map $g:Q\to -H(Q)-D(Q||P)$ has maximim value $\log P(1)$ and minimum value $\log P(d)$. According to Lemma \ref{concavityLemma},
	\begin{equation}\label{equation13}
	V\mapsto \max_{-H(Q)-D(Q||P) = V} H(Q)
	\end{equation}
	is strictly monotone decreasing on $\left[\frac{\sum\log P(i)}{d},\log P(1)\right]$ and strictly monotone increasing on $\left[ \log P(d) ,\frac{\sum\log P(i)}{d}\right]$.
	\\\\
	Similarly, according to Lemma \ref{concavityLemma},
	\begin{equation}\label{equation14}
	V\mapsto \max_{-H(Q)-D(Q||P) = V} -D(Q||P)
	\end{equation}
	is strictly monotone increasing on $\left[\log P(d),-H(P)\right]$ and strictly monotone decreasing on $\left[-H(P),\log P(1)\right]$. For each $V\in \left[ \log P(d),\log P(1) \right]$ we wish to find the probability distribution, $Q$, that solves the maximization problems in (\ref{equation13}) and (\ref{equation14}).
	\\\\
	Given $V\in \left( \log P(d), \log P(1) \right) $, let $\alpha$ be the solution to $F_P'(\alpha)=V$ and consider the distribution $P_\alpha(i)=\frac{P(i)^\alpha}{\sum_j P(j)^\alpha}$. Note that $-H(P_\alpha)-D(P_\alpha||P)=F_P'(\alpha)=V$. We prove that $P_\alpha$ solves the above optimization problems. Let $V\in (\log P(d),\log P(1))$ be given and choose $\alpha\in (-\infty,\infty)$ such that $-H(P_\alpha)-D(P_\alpha||P)=V$. Let $Q\in \cP([d])$ be such that also $-H(Q)-D(Q||P)=V$. We need to show that $H(P_\alpha)\ge H(Q)$.
	\begin{equation}
	\begin{split}
	&\ \frac{1}{1-\alpha} \bigg[ D(Q||P_\alpha) - H(P_\alpha)+ H(Q)\bigg]
	\\
	=&\ \frac{1}{1-\alpha}\bigg[D(Q||P_\alpha) + D(P_\alpha|| P) - D(Q||P)\bigg]
	\\
	=&\
	\frac{1}{1-\alpha}\bigg[\sum_i -Q(i)\log\frac{P(i)^\alpha}{\sum_j P(j)^\alpha}+\frac{P(i)^\alpha}{\sum_j P(j)^\alpha}\log \frac{P(i)^\alpha}{\sum_j P(j)^\alpha}-\frac{P(i)^\alpha}{\sum_j P(j)^\alpha}\log P(i)+Q(i)\log P(i) \bigg]
	\\
	=&\
	\frac{1}{1-\alpha}\bigg[(1-\alpha)\sum_i \Big( Q(i) - \frac{P(i)^\alpha}{\sum_j P(j)^\alpha}\Big)\log P(i) \bigg]
	\\ =&\ 
	\sum_i \Big( Q(i) - \frac{P(i)^\alpha}{\sum_j P(j)^\alpha}\Big)\log P(i) = H(P_\alpha) + D(P_\alpha || P)-H(Q) - D(Q || P) = 0.
	\end{split}
	\end{equation}
	So 
	\begin{equation}
	H(P_\alpha) - H(Q) = D(Q||P_\alpha) \ge 0,
	\end{equation}
	which proves that $P_\alpha$ solves the optimization problems with the values
	\begin{equation}
	\begin{split}
	H(P_\alpha) &= -\sum_i \frac{P(i)^\alpha}{\sum_j P(j)^\alpha}\log \frac{P(i)^\alpha}{\sum_j P(j)^\alpha} = \log \sum P(i)^\alpha - \alpha\frac{\sum P(i)^\alpha \log P(i)}{\sum P(i)^\alpha}   
	\\
	&= F_P(\alpha) -\alpha F_P'(\alpha),
	\\\\
	-D(P_\alpha|| P)&=H(P_\alpha)-\Big(H(P_\alpha)+D(P_\alpha||P)\Big) 
	= H(P_\alpha) + F_P'(\alpha)
	\\
	&=
	F_P(\alpha) +(1-\alpha) F_P'(\alpha).
	\end{split}
	\end{equation}
	By standard type class arguments (approximating $P_\alpha$ by type classes while using \cite[Lemma 4 and Lemma 5]{MR1960075} and the fact that the number of type classes grows only polynomially) and monotonicity of the maps (\ref{equation13}) and (\ref{equation14}) we get the following:
	\\
	When $V\in \left[ \frac{\sum \log P(i)}{d},\log P(1) \right)$
	\begin{equation}
	\begin{split}
	S(V)&=\lim_{n\to \infty}\frac{1}{n}\log\Big| \{ I\in [d]^n | p_I\ge 2^{nV} \}\Big| = \max_{-H(Q)-D(Q|| P)\ge V} H(Q)
	\\
	&= \max_{-H(Q)-D(Q|| P)= V} H(Q)=H(P_\alpha)= F_P(\alpha)-\alpha F_P'(\alpha).
	\end{split}
	\end{equation}
	When $V\in \left( \log P(d), \frac{\sum \log P(i)}{d}\right]$
	\begin{equation}
	\begin{split}
	S_*(V)&=\lim_{n\to \infty}\frac{1}{n}\log\Big| \{ I\in [d]^n | p_I\le 2^{nV} \}\Big| = \max_{-H(Q)-D(Q|| P)\le V} H(Q)
	\\
	&= \max_{-H(Q)-D(Q|| P)= V} H(Q)=H(P_\alpha)= F_P(\alpha)-\alpha F_P'(\alpha).
	\end{split}
	\end{equation}
	When $V\in \left(-\log P(d),-H(P)\right]$
	\begin{equation}
	\begin{split}
	M_*(V)&= \lim_{n\to \infty}\frac{1}{n}\log \sum_{\stackrel{I\in [d]^n}{p_I\le 2^{nV}}} p_I= \max_{-H(Q)-D(Q|| P)\le V} -D(Q|| P)
	\\
	&= \max_{-H(Q)-D(Q|| P)= V} -D(Q|| P)=-D(P_\alpha||P)= F_P(\alpha)+(1-\alpha) F_P'(\alpha).
	\end{split}
	\end{equation}
	When $V\in \left[-H(P),\log P(1)\right)$
	\begin{equation}
	\begin{split}
	M(V)&= \lim_{n\to \infty}\frac{1}{n}\log \sum_{\stackrel{I\in [d]^n}{p_I\ge 2^{nV}}} p_I= \max_{-H(Q)-D(Q|| P)\ge V} -D(Q|| P)
	\\
	&= \max_{-H(Q)-D(Q|| P)= V} -D(Q|| P)=-D(P_\alpha||P)= F_P(\alpha)+(1-\alpha) F_P'(\alpha).
	\end{split}
	\end{equation}
	We may take $\alpha$ to $-\infty$ or $\infty$ and get the results at the boundary.
\end{proof}
\begin{remark}\label{remark1}
	Define $\overline{m_n^P}(V)$ on $\left[ \log P(d),\log P(1) \right]$ to be equal to  $m_n^P(V)$ at the endpoints, but for $V\in \left( \log P(d),\log P(1) \right)$ we use a strict inequality and define
	\begin{equation}
		\overline{m_n^P}(V)=\sum_{\stackrel{I\in [d]^n}{P^{\otimes n}(I)> 2^{nV}}} P^{\otimes n} (I).
	\end{equation}
	Define $\overline{m_{n*}^P},\overline{s_{n}^P}$ and $\overline{s_{n*}^P}$ similarly. By continuity of $M^P,M_*^P,S^P$ and $S_*^P$ one sees that we could replace $m_n^P,m_{n*}^P,s_n^P,s_{n*}^P$ in equations (\ref{eq10}),(\ref{eq11}),(\ref{eq12}),(\ref{eq13}) with respectively $\overline{m_{n}^P},\overline{m_{n*}^P},\overline{s_{n}^P},\overline{s_{n*}^P}$, without the limit changing. Furtermore since all functions are monotone and the limit functions are monotone, continuous and bounded, the convergences are all uniform. This will be important later.
\end{remark}
	A few nice values to keep in mind for $S^P,M^P$ and $M_*^P$ are the following
	\begin{equation}
	\begin{split}
		&M^P(-H(P))=0
		\\
		&M^P(\log P(1)) = \log P(1)+\log \left| \left\{i\in [d]\ |\ P(i)=P(1) \right\}\right|
		\\
		&M_*^P(-H(P))=0
		\\
		&S^P(\log P(1))=\log \left| \left\{i\in [d]\ |\  P(i)=P(1) \right\}\right|
		\\
		&S^P(-H(P)) = H(P)
		\\
		&S^P\left(\sum \frac{\log P(i)}{d} \right) = \log d=H_0(P).
	\end{split}
	\end{equation}

\section{A sufficient and almost necessary condition for asymptotic majorization}
\begin{lemma}\label{convexityLemma}
	Let $F_1, F_2:[0,1]\to \R$ be differentiable with $F_2$ convex, such that $F_1(x) > F_2(x)$ for all $x\in [0,1]$. Let $\epsilon < \min_{x\in [0,1]} F_1(x)-F_2(x)$. Then for all $x,y\in [0,1]$
	\begin{equation}\label{eq18}
	F_1(x)-xF_1'(x) \le F_2(y)-yF_2'(y)+\epsilon \implies
	F_1(x)+(1-x)F_1'(x) > F_2(y)+(1-y)F_2'(y) +\epsilon .
	\end{equation}
\end{lemma}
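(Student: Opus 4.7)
The quantities $F_i(x)-xF_i'(x)$ and $F_i(x)+(1-x)F_i'(x)$ (and their $y$-analogues for $F_2$) are exactly the values at $t=0$ and $t=1$ of the tangent line to $F_1$ at $x$, respectively the tangent line to $F_2$ at $y$. Calling these affine functions $L_1(t)$ and $L_2(t)$ and setting $D(t):=L_1(t)-L_2(t)-\epsilon$, the hypothesis of the implication reads $D(0)\le 0$ and the desired conclusion reads $D(1)>0$. My plan is therefore to evaluate $D$ at the intermediate point $t=x$, show it is strictly positive there, and then use affine linearity of $D$ to propagate this sign out to $t=1$.

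Evaluating $D(x)$ uses both hypotheses of the lemma in one line. Convexity of $F_2$ gives $L_2(x)\le F_2(x)$, and the gap assumption $\epsilon<\min_{[0,1]}(F_1-F_2)$ gives $F_2(x)+\epsilon<F_1(x)=L_1(x)$. Chaining these yields
\[
D(x)=L_1(x)-L_2(x)-\epsilon>0.
\]

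With $D(0)\le 0$ and $D(x)>0$ in hand, the case $x=0$ is immediately contradictory, so the hypothesis is vacuous there and we may assume $x>0$. Since $D$ is affine linear, its slope equals $(D(x)-D(0))/x>0$, hence $D$ is strictly increasing on all of $\R$. Therefore $D(1)\ge D(x)>0$, with equality in the first inequality occurring only when $x=1$, in which case still $D(1)=D(x)>0$; either way we obtain the required conclusion. The only mild subtleties are the boundary cases $x\in\{0,1\}$, handled respectively by vacuity of the hypothesis and by the trivial identity $D(1)=D(x)$. I do not foresee any real obstacle: the whole lemma reduces to the one-dimensional picture of a tangent to a convex function lying below its graph, combined with linear interpolation of the single affine function $D$ between $t=0$, $t=x$, and $t=1$.
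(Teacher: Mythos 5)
Your proof is correct and is essentially the paper's argument: the paper works with the same affine difference of tangent lines (its $g(t)=F_2(y)+(t-y)F_2'(y)-F_1(x)-(t-x)F_1'(x)$ is just $-D(t)-\epsilon$), evaluates it at $t=1,x,0$ using the same convexity-plus-gap estimate at $t=x$, and only differs in being phrased by contraposition rather than directly.
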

\begin{proof}
	We prove the assertion by contraposition.
	Fix $x,y\in [0,1]$ and assume that 
	\begin{equation}
	F_1(x)+(1-x)F_1'(x) \le F_2(y)+(1-y)F_2'(y)+\epsilon.
	\end{equation} 
	Consider the function 
	\begin{equation}
	g(t)=F_2(y)+(t-y)F_2'(y)-F_1(x)-(t-x)F_1'(x).
	\end{equation} 
	By assumption $g(1)\ge-\epsilon$ and by convexity of $F_2$ 
	\begin{equation}
	g(x) = F_2(y)+(x-y)F_2'(y)-F_1(x)\le F_2(x)-F_1(x)<-\epsilon.
	\end{equation}
	Since $g$ is linear with $g(1)\ge -\epsilon$ and $g(x)<-\epsilon$, we must have $g(0)<-\epsilon$ which is equivalent to
	\begin{equation}
	F_1(x)-xF_1'(x) > F_2(y)-yF_2'(y)+\epsilon.
	\end{equation}
\end{proof}
\begin{lemma}\label{convexityLemma2}
	Let $F_1, F_2:[1,\infty)\to \R$ be continuously differentiable, strictly decreasing and convex, such that $F_1(x) > F_2(x)$ for all $x\in [1,\infty)$. Assume further that $\lim_{x\to \infty} F_1'(x) > \lim_{x\to \infty} F_2'(x)$. Then for sufficiently small $\epsilon>0$ and all $x,y\in [1,\infty)$
	\begin{equation}\label{eq7}
	F_1(x)-xF_1'(x) +\epsilon \ge F_2(y)-yF_2'(y)
	\implies 
	F_1(x)+(1-x)F_1'(x) > F_2(y)+(1-y)F_2'(y) +\epsilon.
	\end{equation}
\end{lemma}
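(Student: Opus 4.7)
The plan is to adapt the proof of Lemma \ref{convexityLemma} to the unbounded domain $[1,\infty)$. Recall the earlier strategy: assuming the conclusion fails, one introduces the linear function $g(t) = L_2(t) - L_1(t)$, where $L_i$ is the tangent line to $F_i$ at the relevant point; convexity together with a uniform positive gap between $F_1$ and $F_2$ forces $g(x) < -\epsilon$ at the tangency point, and combining with the failure assumption at $t = 1$ allows one to read off the contradictory value $g(0) < -\epsilon$ by linearity. Two new difficulties appear on $[1,\infty)$: first, the infimum $\inf_{x \ge 1}(F_1(x) - F_2(x))$ might a priori be zero; second, since $0$ now lies outside the domain, qualitative monotonicity of $g$ no longer suffices, and one needs a quantitative lower bound on the slope of $g$ in order to extrapolate from $t = 1$ back to $t = 0$.

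Write $a = \lim_{x \to \infty} F_1'(x)$ and $b = \lim_{x \to \infty} F_2'(x)$, so $a > b$ by assumption. First I would verify that $\delta := \inf_{x \ge 1}(F_1(x) - F_2(x)) > 0$. Since each $F_i'$ is non-decreasing with supremum $a$ (respectively $b$), we have $F_1'(t) - F_2'(t) \ge (a-b)/2$ for $t$ sufficiently large; integrating forces $F_1(x) - F_2(x) \to \infty$, and continuity together with pointwise positivity on the remaining compact portion supplies the desired strict lower bound.

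Next, fix $X \ge 1$ such that $F_1'(x) > a - (a-b)/4$ for all $x \ge X$, and take $\epsilon > 0$ smaller than $\min\{3(a-b)/8,\, \delta/(2X-1)\}$. Arguing by contraposition, assume $g(1) \ge -\epsilon$. Convexity yields $g(x) = L_2(x) - F_1(x) \le F_2(x) - F_1(x) \le -\delta < -\epsilon$, so in particular $x > 1$, and a direct computation from the endpoint values gives the slope estimate $F_1'(x) - F_2'(y) = (g(1) - g(x))/(x-1) \ge (\delta - \epsilon)/(x-1)$. The main obstacle is to promote this to the uniform bound $F_1'(x) - F_2'(y) > 2\epsilon$, for which the derivative-limit hypothesis is essential. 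I handle this by splitting on the size of $x$: for $1 < x < X$ the slope estimate already delivers $(\delta - \epsilon)/(X-1) > 2\epsilon$ by the choice of $\epsilon$; for $x \ge X$ we instead use $F_1'(x) - F_2'(y) \ge F_1'(x) - b > 3(a-b)/4 > 2\epsilon$. In either case $g(0) = g(1) + (F_1'(x) - F_2'(y)) > -\epsilon + 2\epsilon = \epsilon$, which contradicts the hypothesis of the lemma and completes the contraposition.
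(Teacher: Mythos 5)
Your proof is correct and follows essentially the same route as the paper's: the same contrapositive tangent-line argument with the auxiliary linear function $g(t)=F_2(y)+(t-y)F_2'(y)-F_1(x)-(t-x)F_1'(x)$, convexity bounding $g(x)$ by the gap $F_1-F_2$, the derivative-limit hypothesis controlling large $x$, and linear extrapolation to $g(0)>\epsilon$. The differences are only organizational: you prove the global infimum $\delta=\inf_{x\ge 1}\bigl(F_1(x)-F_2(x)\bigr)>0$ and run a single contraposition with a case split on the slope bound, whereas the paper splits up front into $x$ large, $x=1$, and $x\in(1,R]$ using a minimum over a compact interval.
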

\begin{proof}
	Again we prove the assertion by contraposition.
	Choosing 
	\begin{equation}
		\epsilon< \frac{\lim_{x\to \infty} F_1'(x)-\lim_{x\to \infty} F_2'(x)}{2},
	\end{equation} 
	we have $F_1'(x)>F_2'(y)+2\epsilon$ for all sufficiently large $x$ and all $y\in [1,\infty)$, which implies (\ref{eq7}). For $x=1$, the left hand side of (\ref{eq7}) is never true, so (\ref{eq7}) holds. We thus only need to check (\ref{eq7}) for $x\in (1,R]$, where $R>1$ is some large number.
	\\\\
	Choose epsilon such that 
	\begin{equation}
		\epsilon + R^2\epsilon < \min_{z\in [1,R]}  \frac{F_1(z)-F_2(z)}{R}.
	\end{equation} 
	Let $x\in (1,R],y\in [1,\infty)$, 
	and assume that $F_1(x)+(1-x)F_1'(x) \le F_2(y)+(1-y)F_2'(y)+\epsilon$. Consider the function 
	\begin{equation}
		g(t)=F_2(y)+(t-y)F_2'(y)-F_1(x)-(t-x)F_1'(x).
	\end{equation} 
	By assumption $g(1)\ge -\epsilon$ and by convexity of $F_2$ 
	\begin{equation}
	g(x) = F_2(y)+(x-y)F_2'(y)-F_1(x)\le F_2(x)-F_1(x) \le \max_{z\in [1,R]} F_2(z)-F_1(z).
	\end{equation}
	Since $g$ is linear with $g(1)\ge -\epsilon$ and $g(x)<\max_{z\in [1,R]} F_2(z)-F_1(z)$ we have 
	\begin{equation}
	\begin{split}
		g(0)
		&=\frac{xg(1)-g(x)}{x-1} 
		\ge \frac{-x\epsilon-\max_{z\in [1,R]} F_2(z)-F_1(z)}{x-1}
		\ge \frac{-R\epsilon+\min_{z\in [1,R]} F_1(z)-F_2(z)}{x-1} 
		\\
		&= \frac{\min_{z\in [1,R]} \frac{F_1(z)-F_2(z)}{R}-\epsilon}{R(x-1)}
		>
		\frac{\min_{z\in [1,R]} \frac{F_1(z)-F_2(z)}{R}-\epsilon}{R^2}> \epsilon.
	\end{split}
	\end{equation}
	$g(0)>\epsilon$ is equivalent to the left-hand-side of (\ref{eq7})
\end{proof}
We now combine Proposition \ref{HayashiLemma} with (\ref{eq18}) and (\ref{eq7}). Firstly, by considering the formulas (\ref{equation17}) and (\ref{equation18}), and applying Lemma \ref{convexityLemma} to $F_1=F_P$ and $F_2=F_Q$ one gets
\begin{lemma}\label{lemma1}
	Let $P$ and $Q$ be non-uniform probobability distributions with
	\begin{equation}
		\min_{\alpha\in [0,1]} \frac{H_\alpha(P)}{H_\alpha(Q)} > 1.
	\end{equation}
	For sufficiently small $\epsilon>0$ and all $V\in \left[ \frac{\sum \log P(i)}{d_1},-H(P)\right]$ and $W\in \left[ \frac{\sum \log Q(i)}{d_2},-H(Q)\right]$
	\begin{equation}
		S^P(V)\le S^Q(W)+\epsilon \implies M_*^P(V)\ge M_*^Q(W)+\epsilon.
	\end{equation}
\end{lemma}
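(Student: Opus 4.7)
The strategy is to rephrase the statement as a direct instance of Lemma \ref{convexityLemma} via the Legendre-type formulas in Proposition \ref{HayashiLemma}. First, using $H_\alpha(P) = F_P(\alpha)/(1-\alpha)$ for $\alpha\in[0,1)$ and taking the $\alpha\to 1$ limit, the assumption $\min_{\alpha\in[0,1]} H_\alpha(P)/H_\alpha(Q) > 1$ unpacks into the two conditions $F_P(\alpha) > F_Q(\alpha)$ for $\alpha\in[0,1)$ and $F_P'(1) < F_Q'(1)$, while $F_P(1) = F_Q(1) = 0$ by definition.

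Next, for $V\in[\tfrac{\sum\log P(i)}{d_1},-H(P)]$ Proposition \ref{HayashiLemma} produces a unique $\alpha_V\in[0,1]$ with
\begin{equation*}
S^P(V) = F_P(\alpha_V) - \alpha_V F_P'(\alpha_V), \qquad M_*^P(V) = F_P(\alpha_V) + (1-\alpha_V) F_P'(\alpha_V),
\end{equation*}
and similarly $\alpha_W\in[0,1]$ for $W$. The implication $S^P(V) \le S^Q(W) + \epsilon \Rightarrow M_*^P(V) \ge M_*^Q(W) + \epsilon$ is then exactly (\ref{eq18}) of Lemma \ref{convexityLemma} applied with $F_1 = F_P$, $F_2 = F_Q$, $x = \alpha_V$, $y = \alpha_W$ (noting the strict inequality in Lemma \ref{convexityLemma} implies the non-strict one we need).

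The main obstacle is that Lemma \ref{convexityLemma} requires $F_1 > F_2$ strictly on the closed interval $[0,1]$, whereas $F_P(1) = F_Q(1) = 0$. I would circumvent this by fixing a small $\delta > 0$ and choosing $\epsilon < \min_{[0,1-\delta]}(F_P - F_Q)$, which is strictly positive since $F_P - F_Q$ is continuous and vanishes only at $\alpha = 1$ on $[0,1]$; Lemma \ref{convexityLemma} applied to the restrictions of $F_P, F_Q$ to $[0, 1-\delta]$ then disposes of the case $\alpha_V, \alpha_W \in [0, 1-\delta]$. The boundary regime where $\alpha_V$ or $\alpha_W$ lies in $(1-\delta, 1]$ would be handled by continuity at $\alpha = 1$: there $S^P \to H(P), S^Q \to H(Q)$ and $M_*^P, M_*^Q \to 0$, and the strict gap $H(P) > H(Q)$ makes the premise $S^P(V) \le S^Q(W) + \epsilon$ either vacuous (when $S^Q(W)$ is too close to $H(Q)$ to exceed $H(P) - \epsilon$) or else forces $\alpha_W$ bounded away from $1$, in which case $M_*^Q(W) \le -c$ for a positive constant while $M_*^P(V)$ stays within $\epsilon$ of $0$, making the conclusion automatic. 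Shrinking $\delta$ and $\epsilon$ together then yields a single admissible $\epsilon$ working uniformly in $V, W$.
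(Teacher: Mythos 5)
Your proposal takes the same route as the paper: the paper's entire proof of Lemma \ref{lemma1} is the one sentence preceding it, namely that the formulas (\ref{equation17}) and (\ref{equation18}) of Proposition \ref{HayashiLemma} turn the claim into an instance of Lemma \ref{convexityLemma} with $F_1=F_P$, $F_2=F_Q$, $x=\alpha_V$, $y=\alpha_W\in[0,1]$, and this reduction is exactly your first two steps. Where you go beyond the paper is in noticing that the citation is not literally applicable: $F_P(1)=F_Q(1)=0$, so the hypothesis $F_1>F_2$ on all of $[0,1]$ fails and $\min_{[0,1]}(F_P-F_Q)=0$, leaving no admissible $\epsilon$ in the sense of Lemma \ref{convexityLemma}. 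Your two-regime repair --- a positive gap on $[0,1-\delta]$, plus a separate argument near $\alpha=1$ using $H(P)>H(Q)$ (the premise $S^P(V)\le S^Q(W)+\epsilon$ is vacuous when $S^Q(W)$ is close to $H(Q)$, and otherwise forces $\alpha_W$ away from $1$, so $M_*^Q(W)\le -c<0$ while $M_*^P(V)$ is close to $0$) --- is sound, and the order of choices is unproblematic: fix $\epsilon_1<H(P)-H(Q)$, which determines the forced bound on $\alpha_W$ and hence $c$; choose $\delta$ so that $M_*^P\ge -c/2$ on the boundary strip; then take $\epsilon$ below $\epsilon_1$, $c/2$ and the interior gap. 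This is precisely the kind of endpoint care the paper itself exercises in Lemma \ref{convexityLemma2} via the derivative condition at $\infty$ (here the role is played by $F_P'(1)<F_Q'(1)$, i.e.\ $H(P)>H(Q)$), so your extra work supplies what a complete proof of the lemma actually needs.

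One imprecision to fix: you cannot apply Lemma \ref{convexityLemma} ``to the restrictions of $F_P,F_Q$ to $[0,1-\delta]$'' as literally stated, because the right endpoint $1$ of the domain enters the conclusion --- the quantity $F_1(x)+(1-x)F_1'(x)$ is the tangent at $x$ evaluated at $t=1$ --- so rescaling $[0,1-\delta]$ onto $[0,1]$ would replace it by $F_1(x)+(1-\delta-x)F_1'(x)$, which is no longer $M_*^P$. What you actually need, and what the proof of Lemma \ref{convexityLemma} gives verbatim, is the pointwise statement that (\ref{eq18}) holds for a given pair $(x,y)$ as soon as $F_2$ is convex and $F_1(x)-F_2(x)>\epsilon$ at that particular $x$ (the gap is never used at $y$, nor at $t=1$). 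With that observation, $\epsilon<\min_{[0,1-\delta]}(F_P-F_Q)$ handles every $x=\alpha_V\le 1-\delta$ with $y=\alpha_W\in[0,1]$ arbitrary, and your boundary regime covers the remaining $\alpha_V>1-\delta$; the rest of your argument then goes through as written.
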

Secondly, by considering the formulas (\ref{equation16}) and (\ref{equation18}), and applying Lemma \ref{convexityLemma2} to $F_1=F_Q$ and $F_2=F_P$ one gets
\begin{lemma}\label{lemma2}
	Let $P$ and $Q$ be non-uniform probobability distributions with
	\begin{equation}
		\min_{\alpha\in [1,\infty]} \frac{H_\alpha(P)}{H_\alpha(Q)} > 1.
	\end{equation}
	For sufficiently small $\epsilon>0$ and all $V\in \left[ -H(P), \log P(1)\right]$ and $W\in \left[ -H(Q),-\log Q(1) \right]$
	\begin{equation}
	S^P(V)\le S^Q(W)+\epsilon \implies M_*^P(V)+\epsilon \le M_*^Q(W).
	\end{equation}
\end{lemma}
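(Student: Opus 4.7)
The plan is to apply Lemma~\ref{convexityLemma2} with $F_1 = F_Q$ and $F_2 = F_P$, directly paralleling the way Lemma~\ref{lemma1} is obtained from Lemma~\ref{convexityLemma}. The hypothesis $\min_{\alpha \in [1, \infty]} H_\alpha(P)/H_\alpha(Q) > 1$ is equivalent to the pointwise strict inequality $H_\alpha(P) > H_\alpha(Q)$ for every $\alpha \in [1, \infty]$. Using $H_\alpha(R) = F_R(\alpha)/(1-\alpha)$ with $1-\alpha < 0$ for $\alpha > 1$, this translates to $F_Q(\alpha) > F_P(\alpha)$ on $(1, \infty)$; the boundary values give $F_Q'(1) - F_P'(1) = H(P) - H(Q) > 0$ and $\lim_{\alpha \to \infty} F_Q'(\alpha) - \lim_{\alpha \to \infty} F_P'(\alpha) = \log Q(1) - \log P(1) > 0$, the latter being exactly the limit condition demanded by Lemma~\ref{convexityLemma2}. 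Both $F_P$ and $F_Q$ are already continuously differentiable, strictly decreasing, and convex on $[1, \infty)$.

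For $V \in [-H(P), \log P(1)]$ and $W \in [-H(Q), \log Q(1)]$, let $\alpha_V, \alpha_W \in [1, \infty]$ be the unique solutions of $F_P'(\alpha_V) = V$ and $F_Q'(\alpha_W) = W$. Formulas~(\ref{equation16}) and~(\ref{equation18}) of Proposition~\ref{HayashiLemma} give
\[
  S^P(V) = F_P(\alpha_V) - \alpha_V F_P'(\alpha_V), \qquad M^P(V) = F_P(\alpha_V) + (1 - \alpha_V) F_P'(\alpha_V),
\]
and analogously for $Q$. Substituting $x = \alpha_W$ and $y = \alpha_V$ into~(\ref{eq7}) rewrites the implication of Lemma~\ref{convexityLemma2} as $S^P(V) \le S^Q(W) + \epsilon \implies M^P(V) + \epsilon < M^Q(W)$, which is the desired conclusion.

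The main obstacle is that Lemma~\ref{convexityLemma2} requires $F_1 > F_2$ strictly throughout $[1, \infty)$, whereas here $F_Q(1) = F_P(1) = 0$. I would resolve this by first extracting a quantitative gap: continuity of $\alpha \mapsto H_\alpha(P) - H_\alpha(Q)$ on the compact interval $[1, \infty]$ together with its pointwise positivity yields a uniform constant $c > 0$ with $H_\alpha(P) - H_\alpha(Q) \ge c$, which rearranges to the linear-in-$(\alpha - 1)$ bound $F_Q(\alpha) - F_P(\alpha) \ge c(\alpha - 1)$ on $[1, \infty)$. This is just enough to adapt the proof of Lemma~\ref{convexityLemma2}: on any $[1 + \delta, R]$ the gap is bounded below by a strictly positive constant and the original estimate applies verbatim, while on $[1, 1 + \delta]$ one uses the linear bound together with the strict derivative gap $F_Q'(1) - F_P'(1) > 0$ and the alternative representation $g(0) = g(1) + F_1'(x) - F_2'(y)$ of the auxiliary linear function $g$ from the proof of Lemma~\ref{convexityLemma2}. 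Finally, the extremal cases $\alpha_V, \alpha_W \in \{1, \infty\}$ are handled by continuity of the exponent functions; for instance, at $\alpha_V = 1$ one has $S^P(V) = H(P) > H(Q) + \epsilon \ge S^Q(W) + \epsilon$ for any $\epsilon < H(P) - H(Q)$, so the premise is vacuously false.
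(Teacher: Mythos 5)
Your overall route is exactly the paper's: translate the exponents through Proposition~\ref{HayashiLemma} (formulas~(\ref{equation16}) and~(\ref{equation18})) and apply Lemma~\ref{convexityLemma2} with $F_1=F_Q$, $F_2=F_P$, $x=\alpha_W$, $y=\alpha_V$; your silent correction of the statement's typos (the conclusion should read $M^P(V)+\epsilon\le M^Q(W)$, and the $W$-range should end at $\log Q(1)$) agrees with how the lemma is invoked in Proposition~\ref{prop2}. You also correctly spot something the paper glosses over: since $F_Q(1)=F_P(1)=0$, the hypothesis $F_1(x)>F_2(x)$ on all of $[1,\infty)$ fails at the left endpoint, so Lemma~\ref{convexityLemma2} cannot be cited verbatim and some repair near $\alpha=1$ is genuinely required.

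However, the repair you sketch leaves one sub-case open. On $[1,1+\delta]$ your two tools are (i) the linear gap $F_Q(\alpha)-F_P(\alpha)\ge c(\alpha-1)$ and (ii) the identity $g(0)=g(1)+F_1'(x)-F_2'(y)$ together with the derivative gap at $1$. Fed into the contrapositive argument of Lemma~\ref{convexityLemma2}, tool (i) gives $g(0)\ge -\epsilon+c-\epsilon/(x-1)$, which is useless once $x-1$ is of order $\epsilon$ or smaller; and the convexity trick only exploits the gap at the point $x$ (the tangent of $F_1$ at $x$ lies \emph{below} $F_1(y)$, so you cannot instead use the gap at $y$). Tool (ii) gives $g(0)\ge-\epsilon+F_Q'(x)-F_P'(y)$ with $F_Q'(x)$ close to $-H(Q)$, while $F_P'(y)$ may be as large as nearly $\log P(1)$; nothing in the hypotheses rules out $\log P(1)>-H(Q)$ (i.e.\ $H_\infty(P)<H(Q)$), so this difference need not be positive when $y$ is far from $1$. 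Hence the corner ``$x-1$ of order $\epsilon$ and $y$ bounded away from $1$'' is not covered by what you wrote. It can be closed, but by a different observation: there the contrapositive hypothesis $g(1)\ge-\epsilon$, i.e.\ $M^P(V)\ge M^Q(W)-\epsilon$, is itself impossible for small $\epsilon$, because $M^Q(W)=F_Q(x)+(1-x)F_Q'(x)=-D(Q_x\Vert Q)$ tends to $0$ as $x\to 1$, while $M^P(V)=-D(P_y\Vert P)$ stays bounded away from $0$ once $y\ge 1+\delta''$; making this quantitative (choose $\delta''$ from the derivative gap, then take $\epsilon$ small) completes your patch. Your treatment of the boundary case $\alpha_V=1$ via $S^P(V)=H(P)>H(Q)+\epsilon\ge S^Q(W)+\epsilon$ is fine.
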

\begin{proposition}\label{prop1}
	Let $P=P^\downarrow:[d_1]\to [0,1]$ and $Q=Q^\downarrow:[d_2]\to [0,1]$ be non-uniform probobability distributions with
	\begin{equation}
	\min_{\alpha\in [0,1]} \frac{H_\alpha(P)}{H_\alpha(Q)} > 1.
	\end{equation}
	Let $V^*$ be such that $S^P(V^*)\in \left(H(Q),H(P)\right)$.
	Then for all sufficiently large $n$, and all $N$ such that $V=\frac{1}{n}\log(P^{\otimes n\downarrow}(N))\in  \left[\log P(d_1),V^*\right]$
	\begin{equation}\label{eq6}
		\sum_{i=1}^{N-1} P^{\otimes n\downarrow}(i)\le \sum_{i=1}^{N-1} Q^{\otimes n\downarrow}(i).
	\end{equation}
\end{proposition}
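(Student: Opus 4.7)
The plan is to rewrite (\ref{eq6}) as a comparison of mass exponents and then apply Lemma \ref{lemma1}. By the definition of $V$, the top $N-1$ positions of $P^{\otimes n\downarrow}$ all have probability at least $2^{nV}$, so
\begin{equation*}
\sum_{i=1}^{N-1} P^{\otimes n\downarrow}(i) \le m_n^P(V);
\end{equation*}
dually, if $W = W(V)$ is chosen so that $s_n^Q(W) \le N-1$, the top $s_n^Q(W)$ positions of $Q^{\otimes n\downarrow}$ lie inside the first $N-1$, giving
\begin{equation*}
\sum_{i=1}^{N-1} Q^{\otimes n\downarrow}(i) \ge m_n^Q(W).
\end{equation*}
Since $m_n + \overline{m_{n*}} = 1$, (\ref{eq6}) is thus reduced to the single inequality $\overline{m_{n*}^Q}(W) \le \overline{m_{n*}^P}(V)$.

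I would pick $W$ by the relation $S^Q(W) = S^P(V) - c$, with $c = \tfrac{1}{2}(S^P(V^*) - H(Q)) > 0$ a constant uniform in $V$. This is well-defined whenever $S^P(V) \le \log d_2$; in that case $S^Q(W) > H(Q)$ places $W$ strictly to the left of $-H(Q)$ and inside $[\tfrac{\sum \log Q(i)}{d_2}, -H(Q))$, so $M^Q_*(W)$ is strictly negative with a uniform bound. The auxiliary inequality $s_n^Q(W) \le N-1$ then follows directly from the uniform convergence of Remark \ref{remark1}: we have $s_n^Q(W) = 2^{n(S^P(V) - c + o(1))}$ while $N - 1 \ge \overline{s_n^P}(V) = 2^{n(S^P(V) + o(1))}$, and the gap $2^{nc}$ wins for all large $n$.

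The comparison $\overline{m_{n*}^Q}(W) \le \overline{m_{n*}^P}(V)$ splits into three regimes. If $S^P(V) > \log d_2$ the choice of $W$ is not available, but $N - 1 \ge \overline{s_n^P}(V) > d_2^n$ for large $n$, so the right-hand side of (\ref{eq6}) equals $1$ and the inequality is trivial. If $V \in [\tfrac{\sum \log P(i)}{d_1}, -H(P))$ and $S^P(V) \le \log d_2$, Lemma \ref{lemma1} applies to the pair $(V, W)$ and delivers $M^P_*(V) \ge M^Q_*(W) + \epsilon$ for some fixed $\epsilon > 0$; uniform convergence upgrades this to $\overline{m_{n*}^P}(V) \ge 2^{n\epsilon/2}\,\overline{m_{n*}^Q}(W) \ge \overline{m_{n*}^Q}(W)$ for all large $n$. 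If $V \in [-H(P), V^*]$ Lemma \ref{lemma1} is no longer available, but $M^Q_*(W) \le -c'' < 0$ uniformly still forces $\overline{m_{n*}^Q}(W) \le 2^{-c'' n / 2}$, while monotonicity of $\overline{m_{n*}^P}$ in $V$ reduces the required lower bound on the $P$ side to one at $V = -H(P)$.

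The main obstacle is precisely this last reduction: the exponent description alone only gives $\overline{m_{n*}^P}(-H(P)) = 2^{o(n)}$, which could a priori vanish subexponentially. I would close this gap with a concentration argument, e.g.\ the central limit theorem applied to the non-degenerate random variables $\log P(X_j)$ (non-degenerate since $P$ is assumed non-uniform), yielding $\overline{m_{n*}^P}(-H(P)) \to 1/2$ and in particular a positive lower bound uniform in $n$. Combined with the exponentially small upper bound on the $Q$-side, the three regimes together cover all $V \in [\log P(d_1), V^*]$.
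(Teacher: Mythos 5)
Your overall architecture is the same as the paper's (reduce \eqref{eq6} to a head/tail mass comparison via $s_n^Q(W)\le N-1$ plus exponent inequalities, a trivial case when $N$ exceeds $d_2^n$, Lemma \ref{lemma1} in the middle range, and ``$M_*^P=0$ versus $M_*^Q<0$'' at the top), but two steps fail as written. First, the invocation of Lemma \ref{lemma1}: with your choice $S^Q(W)=S^P(V)-c$, the hypothesis of the implication in the lemma, $S^P(V)\le S^Q(W)+\epsilon$, reads $c\le\epsilon$; the lemma is only asserted for sufficiently small $\epsilon$, while $c=\tfrac12\bigl(S^P(V^*)-H(Q)\bigr)$ is a fixed constant that may well exceed that threshold, so the lemma simply does not apply to your pair $(V,W)$. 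Nor can you argue that a larger gap is harmless a fortiori: increasing $c$ decreases $S^Q(W)$, hence (since $S^Q$ is decreasing) increases $W$ and pushes $M_*^Q(W)$ up towards $0$, making the desired conclusion $M_*^P(V)\ge M_*^Q(W)+\epsilon$ harder, not easier. The repair is to take the gap equal to the lemma's $\epsilon$; your regime-3 requirements only need the gap to be smaller than $S^P(V^*)-H(Q)$, so shrinking it costs nothing --- and this is exactly what the paper does.

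Second, the case split at exactly $S^P(V)=\log d_2$ leaves a band uncovered. The quantifiers are ``for all large $n$ and all $N$'', so you need uniformity in $V$, and Remark \ref{remark1} only gives $\overline{s_n^P}(V)\ge 2^{n(S^P(V)-\delta_n)}$ with a $V$-independent error $\delta_n\to 0$. For $V$ with $0<S^P(V)-\log d_2<\delta_n$ (such values of $V=\tfrac1n\log P^{\otimes n\downarrow}(N)$ occur for arbitrarily large $n$), your regime-1 conclusion $N-1\ge d_2^n$ does not follow, and regime 2 excludes these $V$ by assumption, so \eqref{eq6} is not established there. This is precisely why the paper introduces the cutoff $V_1$ with $S^P(V_1)=H_0(Q)+\epsilon$: below $V_1$ the fixed margin $\epsilon$ beats the approximation error, and above $V_1$ the choice $S^Q(W)=S^P(V)-\epsilon$ stays inside the range of $S^Q$. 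With these two repairs your argument coincides with the paper's proof. Incidentally, the CLT step in the top regime is unnecessary: uniform convergence already gives $\overline{m_{n*}^P}(V)\ge 2^{-n\delta_n}$ there (since $M_*^P(V)=0$), which dominates the exponentially small bound $2^{n(-c''+\delta_n)}$ on the $Q$ side; your CLT bound is correct, just not needed.
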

\begin{proof}
	Let $\epsilon>0$ be small enough that Lemma \ref{lemma1} applies. Assuming $\epsilon< H_0(P)-H_0(Q)$, we may let $V_1$ be such that $S_P(V_1)=H_0(Q)+\epsilon$. Now let $n$ be large enough that for both $P$ and $Q$ and all $V$ and $W$ (\ref{eq10}),(\ref{eq11}),(\ref{eq12}),(\ref{eq13}) are good approximations, and also good approximations when replaced by the alternative versions in Remark \ref{remark1} (this may be done since the convergences are uniform). Note that $P^{\otimes n\downarrow}(N)=2^{nV}$ such that $N\ge \overline{s_n^P}(V)$.
	We split into three cases. 
	\\\\
	
	First assume that $V\in \left[ \log P(d_1),V_1 \right]$. Then
	\begin{equation}
		\frac{1}{n} \log  N \ge \frac{1}{n}\log \overline{s_n^P}(V) \ge \frac{1}{n}\log \overline{s_n^P}(V_1) \simeq S^P(V_1) = H_0(Q)+\epsilon>H_0(Q),
	\end{equation}
	which implies $N> 2^{nH_0(Q)}$, so
	\begin{equation}
		\sum_{i=1}^{N-1} Q^{\otimes n\downarrow}(i) = 1
	\end{equation}
	and (\ref{eq6}) holds trivially.
	\\\\
	Assume now that $V\in \left[ V_1,-H(P) \right]$. Let $W\in \left[ \frac{\sum \log Q(i)}{d_2},-H(Q) \right]$ be such that $S^Q(W)+\epsilon=S^P(V)$. This is possible by the definition of $V_1$ and by assuming that $\epsilon <H(P)-H(Q)$. Then
	\begin{equation}
		S^P(V)\le S^Q(W)+\epsilon,
	\end{equation}
	which by Lemma \ref{lemma1} implies
	\begin{equation}
		M_*^P(V) \ge M_*^Q(W)+\epsilon.
	\end{equation}
	And
	\begin{equation}\label{eq14}
	\frac{1}{n} \log  N \ge \frac{1}{n} \log \overline{s_n^P}(V) \simeq
	S^P(V) = S^Q(W)+\epsilon > S^Q(W) \simeq \frac{1}{n}\log {s_n^Q}(W),
	\end{equation}
	which implies $N>{s_n^Q}(W)$. Therefore
	\begin{equation}\label{eq15}
	\begin{split}
		&\frac{1}{n} \log\sum_{i=N}^{\infty} P^{\otimes n\downarrow}(i)
		\ge		
		\frac{1}{n} \log \overline{m_{n*}^P}(V)
		\simeq 
		M_*^P(V)
		\ge M_*^Q(W)+\epsilon 
		\\
		&>
		M_*^Q(W)
		\simeq 
		\frac{1}{n} \log m_{n*}^Q(W)
		=
		\frac{1}{n}\log \sum_{\stackrel{I\in [d_2]^n}{Q^{\otimes n}(I)\le 2^{nV}}} Q^{\otimes n}(I)
		\\
		&\ge
		\frac{1}{n} \log \sum_{i={s_n^Q}(W)}^{\infty} Q^{\otimes n\downarrow}(i)
		\ge
		\frac{1}{n} \log \sum_{i=N}^{\infty} Q^{\otimes n\downarrow}(i),
	\end{split}
	\end{equation}
	which implies (\ref{eq6}).
	\\\\
	Finally, assume that $V\in [-H(P),V^*]$. Let $W$ be such that $S^Q(W)\in \left( H(Q),S^P(V^*) \right)$. Again, as in (\ref{eq14}), $N>{s_n^Q}(W)$, and since $M_*^P(V)=0>M_*^Q(W)$, we conclude as in (\ref{eq15}) that 
	\begin{equation}
	\frac{1}{n} \log\sum_{i=N}^{\infty} P^{\otimes n\downarrow}(i)\ge \sum_{i=N}^{\infty} Q^{\otimes n\downarrow}(i).
	\end{equation}
\end{proof}
\begin{proposition}\label{prop2}
	Let $P$ and $Q$ be non-uniform probobability distributions with
	\begin{equation}
	\min_{\alpha\in [1,\infty]} \frac{H_\alpha(P)}{H_\alpha(Q)} > 1.
	\end{equation}
	Let $V^*$ be such that $S^P(V^*)\in \left(H(Q),H(P)\right)$.
	Then for all sufficiently large $n$, and all $N$ such that $V=\frac{1}{n}\log(P^{\otimes n\downarrow}(N))\in  \left[ V^*, \log P(1) \right]$
	\begin{equation}\label{eq16}
	\sum_{i=1}^{N} P^{\otimes n\downarrow}(i)\le \sum_{i=1}^{N} Q^{\otimes n\downarrow}(i)
	\end{equation}
\end{proposition}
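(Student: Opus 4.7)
The plan is to mirror the proof of Proposition \ref{prop1} with the roles of Lemma \ref{lemma1}/Lemma \ref{lemma2} exchanged and the roles of $(M_*^P, M_*^Q)$/$(M^P, M^Q)$ exchanged. Concretely, the upper bound is $\sum_{i=1}^N P^{\otimes n\downarrow}(i) \le m_n^P(V)$ (since each of the top $N$ values is at least $2^{nV}$ and is therefore counted in $m_n^P(V)$), while the lower bound takes the form $\sum_{i=1}^N Q^{\otimes n\downarrow}(i) \ge m_n^Q(W)$ whenever $s_n^Q(W) \le N$ (in which case the top $s_n^Q(W)$ terms of $Q^{\otimes n\downarrow}$ are included in the partial sum, and they sum to exactly $m_n^Q(W)$). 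So it suffices, for $V \in [V^*, \log P(1)]$ and $n$ large, to exhibit a $W$ satisfying $S^Q(W) < S^P(V)$ and $M^Q(W) > M^P(V)$ with an $\epsilon$-margin on the exponents.

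Fix $\epsilon>0$ small enough that Lemma \ref{lemma2} applies, and split on the value of $S^P(V) \in [\log|\{P = P(1)\}|, S^P(V^*)]$. In the upper boundary regime, where $S^P(V) > H(Q) + \epsilon$, I pick a fixed $W_0$ in the left branch, i.e.\ $W_0 \in (\sum \log Q(i)/d_2, -H(Q))$, with $S^Q(W_0) \in (H(Q), S^P(V))$. For such $W_0$, formula (\ref{equation16}) gives $M^Q(W_0)=0$ and in fact $m_n^Q(W_0) \to 1$ (its complement, $m_{n*}^Q(W_0)$, has strictly negative exponent by (\ref{equation17})), while $V > -H(P)$ gives $M^P(V) < 0$ so that $m_n^P(V) \to 0$ exponentially; combined with $s_n^Q(W_0) \le N$ for large $n$, this yields the desired inequality. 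In the main regime, $S^P(V) \in [\log|\{Q=Q(1)\}| + \epsilon, H(Q) + \epsilon]$, I pick $W \in [-H(Q), \log Q(1)]$ with $S^Q(W) = S^P(V) - \epsilon$, apply Lemma \ref{lemma2} with $F_1 = F_Q, F_2 = F_P$ to obtain $M^Q(W) \ge M^P(V) + \epsilon$, and conclude exactly as in (\ref{eq15}) of Proposition \ref{prop1}.

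The remaining lower boundary regime is $S^P(V) < \log|\{Q=Q(1)\}| + \epsilon$, i.e.\ $V$ very close to $\log P(1)$, where no interior $W \in [-H(Q), \log Q(1)]$ has the required value of $S^Q$. Here the plan is to exploit that $N \ge \overline{s_n^P}(V)$ is at most of order $(|\{Q=Q(1)\}|\, 2^\epsilon)^n$. If $N \le |\{Q = Q(1)\}|^n$ then all of the top $N$ values of $Q^{\otimes n\downarrow}$ equal $Q(1)^n$, so $\sum_{i=1}^N Q^{\otimes n\downarrow}(i) = NQ(1)^n \ge NP(1)^n \ge \sum_{i=1}^N P^{\otimes n\downarrow}(i)$ using $P(1) < Q(1)$ (which is the $\alpha=\infty$ instance of the hypothesis and is exactly the assumption $\lim_{x\to\infty} F_Q'(x) > \lim_{x\to\infty} F_P'(x)$ of Lemma \ref{lemma2}). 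For the intermediate range $|\{Q=Q(1)\}|^n < N \lesssim (|\{Q=Q(1)\}|\, 2^\epsilon)^n$, I plan to take $W = \log Q(1)$ and invoke the limiting ($\alpha\to\infty$) case of Lemma \ref{lemma2}: the boundary values are $M^Q(\log Q(1)) = \log(|\{Q=Q(1)\}|Q(1))$ and $M^P(V) \to \log(|\{P=P(1)\}|P(1))$ as $V \to \log P(1)$, and the strict gap $\log Q(1) > \log P(1)$ together with the hypothesis of the $S^P(V) \le S^Q(W) + \epsilon$ form gives $M^P(V) + \epsilon \le M^Q(\log Q(1))$ for sufficiently small $\epsilon$, closing the estimate as before.

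The main obstacle I expect is this lower boundary regime: Lemma \ref{lemma2} is formulated on $[1,\infty)$ and does not literally cover the endpoint $\alpha = \infty$, so the delicate point is to justify that the boundary extension of the lemma (or an equivalent direct comparison using $P(1) < Q(1)$ together with Remark \ref{remark1}'s uniform convergence) gives the required strict separation $M^P(V) + \epsilon \le M^Q(\log Q(1))$ uniformly in $V$ close to $\log P(1)$. The remaining cases are essentially structural copies of the argument in Proposition \ref{prop1} and the key input is just Lemma \ref{lemma2}.
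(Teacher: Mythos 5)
Your proposal is correct and follows essentially the same route as the paper: the same reduction $\sum_{i=1}^{N}P^{\otimes n\downarrow}(i)\le m_n^P(V)$ versus $\sum_{i=1}^{N}Q^{\otimes n\downarrow}(i)\ge m_n^Q(W)$ for a $W$ with $s_n^Q(W)\le N$, the same three-way split on $S^P(V)$, and Lemma \ref{lemma2} doing the work in the main regime; your choice of a fixed $W_0$ with $M^Q(W_0)=0$ in the regime $S^P(V)>H(Q)+\epsilon$ is an inessential variant of the paper's choice of $W>-H(Q)$ with $M^Q(W)>M^P(V^*)$. The one genuine divergence is the regime near $\log P(1)$, which you flag as your main obstacle: the paper avoids the $\alpha=\infty$ endpoint of Lemma \ref{convexityLemma2} altogether by introducing $W^*\in(\log P(1),\log Q(1))$ with $S^Q(W^*)=S^Q(\log Q(1))+\epsilon$ and noting that $S^P(V)\le S^Q(W^*)<S^Q(\log P(1))$ forces $N\le s_n^Q(\log P(1))$, so each of the top $N$ entries of $Q^{\otimes n\downarrow}$ is at least $2^{n\log P(1)}=P(1)^n$, which dominates every entry of $P^{\otimes n}$ termwise; this single observation covers both of your sub-cases ($N\le|\{Q=Q(1)\}|^n$ and the intermediate range) at once. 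Your alternative patch for the intermediate range does close, e.g.\ via the identity $M^P(V)=S^P(V)+V$ valid for $V\ge -H(P)$, which gives
\begin{equation*}
M^P(V)\le \log\bigl|\{Q=Q(1)\}\bigr|+\epsilon+\log P(1)<\log\bigl|\{Q=Q(1)\}\bigr|+\log Q(1)=M^Q(\log Q(1))
\end{equation*}
for $\epsilon$ small compared with $\log Q(1)-\log P(1)$, together with the exact values $s_n^Q(\log Q(1))=|\{Q=Q(1)\}|^n$ and $m_n^Q(\log Q(1))=(|\{Q=Q(1)\}|\,Q(1))^n$; so your route is workable, just slightly more laborious than the paper's termwise comparison.
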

\begin{proof}
	Like in the proof of Proposition \ref{prop1} we let $\epsilon>0$ be small enough that Lemma \ref{lemma2} applies.
	We split into three cases.
	Letting $\epsilon>0$ be sufficiently small we may let $W^*\in \left( \log P(1),\log Q(1) \right)$ be the solution to $S^Q(W^*)=S^Q(\log Q(1))+\epsilon$.
	\\\\
	Firstly we assume that $S^P(V) \le S^Q(W^*)$, then 
	\begin{equation}
		\frac{1}{n}\log N \le \frac{1}{n} \log s_n^P(V)\simeq S^P(V) \le S^Q(W^*)<S^Q(\log P(1))\sim \frac{1}{n}\log s_n^Q(\log P(1))
	\end{equation}
	showing that $N\le s_n^Q(\log P(1))$ which implies that $Q^{\otimes N\downarrow}(i)\ge \log P(1)$ for all $i\in [N]$. So
	\begin{equation}
	\begin{split}
		\sum_{i=1}^{N} Q^{\otimes n\downarrow}(i)\ge N\log P(1) \ge  \sum_{i=1}^{N} P^{\otimes n\downarrow}(i)
	\end{split}
	\end{equation}
	Secondly we assume that $S^P(V)\in \left[ S^Q(W^*) , H(Q) \right]$. Let $W\in \left[ -H(Q),\log Q(1) \right]$ be such that $S^Q(W)+\epsilon = S^P(V)$, which is possible by the choice of $W^*$. By Lemma \ref{lemma2}
	\begin{equation}
		M^P(V)+\epsilon \le M^Q(W).
	\end{equation}
	And
	\begin{equation}
		\frac{1}{n} \log N \ge \frac{1}{n} \log \overline{s_n^P}(V)\simeq S^P(V) = S^Q(W)+\epsilon/2 > S^Q(W) \simeq \frac{1}{n} \log s_n^Q(W),
	\end{equation}
	showing that $N> s_n^Q(W)$.
	\begin{equation}
	\begin{split}
		&\frac{1}{n}\log \sum_{i=1}^N P^{\otimes n\downarrow}(i)
		\le 
		\frac{1}{n}\log m_n^P(V) \simeq M^P(V) 
		\\
		&< 
		M^P(V)+\epsilon
		\le M^Q(W)\simeq \frac{1}{n} \log m_n^Q(W)
		\\
		&
		=
		\frac{1}{n} \log \sum_{i=1}^{s_n^Q(W)} Q^{\otimes n\downarrow}(i)
		\le
		\frac{1}{n}\log \sum_{i=1}^{N} Q^{\otimes n\downarrow}(i).
	\end{split}
	\end{equation}
	Finally assume that $S^P(V)\in\left[ H(Q),S^P(V^*) \right]$. Let $W > -H(Q)$ be such that $M^Q(W)>M^P(V^*)$. Then
	\begin{equation}
	\frac{1}{n} \log N \ge \frac{1}{n}\log \overline{s_n^P}(V) \simeq S^P(V) \ge H(Q) > S^Q(W) \simeq \frac{1}{n}\log s_n^Q(W),
	\end{equation} 
	showing that $N>s_n^Q(W)$.
	\begin{equation}
	\begin{split}
	&\frac{1}{n} \log\sum_{i=1}^{N} P^{\otimes n\downarrow}(i)
	\le		
	\frac{1}{n} \log {m_{n}^P}(V)
	\simeq 
	M^P(V)
	\le M^P(V^*)< M^Q(W) 
	\\
	&\simeq 
	\frac{1}{n} \log m_{n}^Q(W)
	=
	\frac{1}{n}\log \sum_{\stackrel{I\in [d_2]^n}{Q^{\otimes n}(I)\ge 2^{nW}}} Q^{\otimes n}(I)
	\\
	&=
	\frac{1}{n} \log \sum_{i=1}^{s_n^Q(W)} Q^{\otimes n\downarrow}(i)
	\le
	\frac{1}{n} \log \sum_{i=1}^{N} Q^{\otimes n\downarrow}(i).
	\end{split}
	\end{equation}'
\end{proof}
So far we have assumed that all probability distributions are non-uniform. This was mainly a matter of convenience. In the following we no longer make this assumption. If $Q$ is the trivial probability distribution (i.e. $\left|\supp(Q)\right|=1$), then $P^{\otimes n} \preceq Q^{\otimes n}$ holds for any $P$ and $n$, so this case is rather uninsteresting.
\begin{proposition}\label{prop3}
	Let $P=P^\downarrow:[d_1]\to \R$ and $Q=Q^\downarrow:[d_2]\to \R$ be two probability distributions with $d_2>1$ and assume that
	\begin{equation}
		\min_{\alpha\in [0,\infty]} \frac{H_\alpha(P)}{H_\alpha(Q)} > 1.
	\end{equation}
	For sufficiently large $n$
	\begin{equation}
		P^{\otimes n} \preceq Q^{\otimes n}
	\end{equation}
\end{proposition}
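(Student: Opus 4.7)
My plan is to reduce to the non-uniform setting so that Propositions \ref{prop1} and \ref{prop2} apply, and then to combine those two propositions so that their ranges of applicability together cover every index $N$. For the reduction, first note that if $Q$ is uniform on $[d_2]$ then $H_\infty(Q)=\log d_2$, so the $\alpha=\infty$ instance of the hypothesis forces $\max_i P(i)<1/d_2$; consequently $\max P^{\otimes n}<1/d_2^n$, and the estimate $\sum_{i=1}^N P^{\otimes n\downarrow}(i)\le N(\max P)^n<\min(N/d_2^n,1)=\sum_{i=1}^N Q^{\otimes n\downarrow}(i)$ yields $P^{\otimes n}\preceq Q^{\otimes n}$ for every $n$. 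If instead $P$ is uniform on $[d_1]$ and $Q$ is non-uniform, I replace $P$ by a non-uniform perturbation $P_\epsilon$ with $P_\epsilon(1)=1/d_1+\epsilon$, $P_\epsilon(d_1)=1/d_1-\epsilon$, and $P_\epsilon(i)=1/d_1$ otherwise. Then $P\preceq P_\epsilon$ passes to tensor powers (tensoring doubly-stochastic maps is doubly-stochastic), so $P^{\otimes n}\preceq P_\epsilon^{\otimes n}$; and the squeeze $-\log(1/d_1+\epsilon)\le H_\alpha(P_\epsilon)\le \log d_1$ valid for all $\alpha\in[0,\infty]$ gives uniform convergence $H_\alpha(P_\epsilon)\to H_\alpha(P)$ as $\epsilon\to 0$, so the hypothesis survives for $(P_\epsilon,Q)$ when $\epsilon$ is small and we reduce to the main case.

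In the main case both $P$ and $Q$ are non-uniform, so both propositions apply. Since $S^P$ is continuous and strictly decreases on $[-H(P),\log P(1)]$ from $S^P(-H(P))=H(P)>H(Q)$, I can pick $V^*$ with $S^P(V^*)\in(H(Q),H(P))$ and feed it to both propositions. For each $N\in\N$ and sufficiently large $n$, I then verify $\sum_{i=1}^N P^{\otimes n\downarrow}(i)\le\sum_{i=1}^N Q^{\otimes n\downarrow}(i)$ by cases: for $N\ge d_1^n$ both sums equal $1$ (using $d_1>d_2$ from the $\alpha=0$ hypothesis); for $N<d_1^n$ set $V_N=\tfrac{1}{n}\log P^{\otimes n\downarrow}(N)\in[\log P(d_1),\log P(1)]$ and apply Proposition \ref{prop2} at $N$ directly when $V_N\ge V^*$, or, when $V_N<V^*$, apply Proposition \ref{prop1} at $N+1$ (valid because $V_{N+1}\le V_N<V^*$ and $V_{N+1}\ge\log P(d_1)$), whose conclusion at index $N+1$ is precisely the desired inequality at index $N$.

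The main obstacle I anticipate is the off-by-one bookkeeping in this last step: Proposition \ref{prop1} bounds cumulative sums up to $N-1$ whereas Proposition \ref{prop2} bounds them up to $N$, and I must confirm that the shift $N\mapsto N+1$ in the Proposition \ref{prop1} regime genuinely covers every index and that the two regimes mesh cleanly at the breakpoint $V=V^*$. A minor secondary point is to check that the perturbation in the uniform-$P$ reduction gives \emph{uniform} convergence of the Rényi entropies in $\alpha\in[0,\infty]$ so that the strict inequality $\min_\alpha H_\alpha(P_\epsilon)/H_\alpha(Q)>1$ is preserved; this is handled by the squeeze above.
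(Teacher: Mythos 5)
Your proof is correct, and its core is the same as the paper's: reduce to the non-uniform situation and then invoke Propositions \ref{prop1} and \ref{prop2} with a common $V^*$ satisfying $S^P(V^*)\in(H(Q),H(P))$. The differences are in the degenerate cases and in explicitness. The paper does not case-split at all: it perturbs \emph{both} distributions unconditionally, making $P_\delta$ more peaked and $Q_\delta$ flatter, and then sandwiches $P^{\otimes n}\preceq P_\delta^{\otimes n}\preceq Q_\delta^{\otimes n}\preceq Q^{\otimes n}$, which handles uniform $P$ and uniform $Q$ in one stroke. You instead dispose of uniform $Q$ by the elementary bound coming from $\max_i P(i)<1/d_2$ (which even gives majorization for every $n$, not just large $n$), and perturb only $P$ when $P$ is uniform; this is a perfectly valid, slightly more hands-on variant, and your squeeze between $H_\infty(P_\epsilon)$ and $H_0(P_\epsilon)$ does give the needed uniformity in $\alpha$. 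A genuine plus of your write-up is that you make explicit the index bookkeeping that the paper leaves implicit: Proposition \ref{prop1} applied at $N+1$ bounds partial sums up to $N$, Proposition \ref{prop2} covers $V_N\ge V^*$, and $N\ge d_1^n$ is trivial since $d_1>d_2$; this gluing is exactly what is needed and you verify it correctly. Two cosmetic points: the chained inequality $\sum_{i=1}^N P^{\otimes n\downarrow}(i)\le N(\max P)^n<\min(N/d_2^n,1)$ is literally false for $N\ge d_2^n$ (there you should just use $\sum\le 1$, as your parenthetical min already suggests), and the case $d_1=1$ never arises since the hypothesis forces $H_\alpha(P)>H_\alpha(Q)>0$; neither affects the argument.
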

\begin{proof}
	If $d_1=1$ then ${H_\alpha(P)}=0$ for all $\alpha$, so we may assume that $d_1>1$.
	For small $\delta>0$, let
	\begin{equation}
		P_\delta(i) =
		\begin{cases}
			P(1)+\delta & \text{if } i=1,\\
			P(i) & \text{if } 1<i<d_1,\\
			P(d_1)-\delta & \text{if } i=d_1.
		\end{cases}
	\end{equation}
	\begin{equation}
	Q_\delta(i) =
	\begin{cases}
		Q(1)-\delta & \text{if } i=1,\\
		Q(i) & \text{if } 1<i<d_2,\\
		Q(d_1)+\delta & \text{if } i=d_2.
	\end{cases}
	\end{equation}
	When $\delta$ is sufficiently small
	\begin{equation}
		\min_{\alpha\in [0,\infty]} \frac{H_\alpha(P_\delta)}{H_\alpha(Q_\delta)} > 1.
	\end{equation}
	By applying Propositions \ref{prop1} and \ref{prop2} to $P_\delta$ and $Q_\delta$, we get for large $n$
	\begin{equation}
		P^{\otimes n}\preceq P_\delta^{\otimes n} \preceq Q_\delta^{\otimes n} \preceq Q^{\otimes n}.
	\end{equation}
\end{proof}
We have now established a sufficient condition for asymptotic majorization. In fact this condition is almost necesarry. 
It is well known that for $\alpha\in (0,\infty)$ the $\alpha$-R\'enyi entropy is strictly Schur-concave (this is a consequence of the fact that $p\mapsto \frac{1}{1-\alpha}p^\alpha$ is strictly concave). In other words:
\begin{proposition}\label{prop5}
	Let $P$ and $Q$ be two probability distribution with $P\preceq Q$. Then either
	\begin{equation}
	P^\downarrow=Q^\downarrow
	\end{equation}
	or
	\begin{equation}
	H_\alpha(P)>H_\alpha(Q)\quad \text{for all } \alpha\in (0,\infty)
	\end{equation}
\end{proposition}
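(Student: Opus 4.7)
The plan is to establish strict Schur-concavity of $H_\alpha$ on probability distributions for each $\alpha\in(0,\infty)$, which is exactly the conclusion of the proposition. The key classical input is the Hardy-Littlewood-P\'olya characterization: $P \preceq Q$ if and only if $P^\downarrow$ is obtainable from $Q^\downarrow$ by a finite sequence of T-transforms, each replacing a pair of coordinates $(a,b)$ by $\bigl((1-\lambda)a+\lambda b,\ \lambda a+(1-\lambda)b\bigr)$ for some $\lambda\in[0,1]$. Call such a T-transform \emph{non-trivial} when $a\neq b$ and $\lambda\in(0,1)$. Since $P^\downarrow\neq Q^\downarrow$, at least one T-transform in any such sequence must be non-trivial.

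First I would choose, for each $\alpha$, a function whose coordinate sum captures $H_\alpha$ up to a strictly monotone transformation: set $f_\alpha(p)=p^\alpha$ for $\alpha\in(0,1)$, set $f_1(p)=-p\log p$, and set $f_\alpha(p)=-p^\alpha$ for $\alpha>1$. Each is strictly concave on $[0,1]$; for $\alpha>1$ this relies on strict convexity of $p\mapsto p^\alpha$. Then strict concavity applied at two points shows that every non-trivial T-transform strictly increases the functional $F_\alpha(R)=\sum_i f_\alpha(R(i))$, while trivial transforms leave it unchanged. Composing the inequalities along the whole T-transform sequence produces $F_\alpha(P)>F_\alpha(Q)$.

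It remains to pass back to $H_\alpha$ itself. For $\alpha\in(0,1)$, $H_\alpha=\tfrac{1}{1-\alpha}\log F_\alpha$ has positive prefactor and $\log$ is strictly increasing, so $H_\alpha(P)>H_\alpha(Q)$ is immediate. For $\alpha=1$ one has $H_1=F_1$ on the nose. For $\alpha>1$ the sign bookkeeping needs a little care: the definition $f_\alpha=-p^\alpha$ packages the strict convexity of $p^\alpha$ into strict concavity of $f_\alpha$, so the T-transform argument yields $\sum_i P(i)^\alpha<\sum_i Q(i)^\alpha$; but $1-\alpha<0$, so applying $\tfrac{1}{1-\alpha}\log(\cdot)$ flips the inequality back to $H_\alpha(P)>H_\alpha(Q)$. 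The single mildly delicate point is this sign flip in the $\alpha>1$ regime; otherwise the proof is a direct application of strict two-point Jensen's inequality along a T-transform decomposition of $Q$, and no serious obstacle arises.
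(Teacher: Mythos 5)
Your proof is correct and is essentially the approach the paper relies on: the paper cites as ``well known'' the strict Schur-concavity of $H_\alpha$, which follows from strict concavity of $p\mapsto\frac{1}{1-\alpha}p^\alpha$, and your T-transform plus strict two-point Jensen argument (with the correct sign flip for $\alpha>1$) is precisely the standard proof of that fact. The only detail left implicit is padding $P^\downarrow$ and $Q^\downarrow$ with zeros to a common length before applying the Hardy--Littlewood--P\'olya decomposition, which is harmless since $f_\alpha(0)=0$ for every $\alpha\in(0,\infty)$.
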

Using the fact that $H_\alpha(P^{\otimes n})=nH_\alpha (P)$, we may sum up the contents of Propositions \ref{prop3} and \ref{prop5} as follows: When $P^\downarrow\not = Q^\downarrow$;
\begin{equation}\label{eq24}
	\begin{split}
	\forall \alpha\in \left[0,\infty\right]&:H_{\alpha}(P)>H_{\alpha}(Q)
	\\
	&\Downarrow \ref{prop3}
	\\
	  \exists n\in \N&: P^{\otimes n} \preceq Q^{\otimes n}
	 \\
	 &\Downarrow \ref{prop5}
	 \\
	 \forall \alpha\in (0,\infty)&:H_{\alpha}(P)>H_{\alpha}(Q).
	\end{split}
\end{equation}
\begin{remark}
It is natural to ask if we can make requirements at $0$ and $\infty$ in order to get a biimplication, that is, if we can determine $\exists n\in \N: P^{\otimes n} \preceq Q^{\otimes n}$ entirely from comparing Rényi entropies. It seems that in order to do so, we would have to be more careful with our estimations. The author cautiously conjectures that requiring a weak inequality at $\infty$ is sufficient, and that the requirement of a sharp inequality at $0$ could be replaced by a similar condition regarding the $\alpha$-R\'enyi entropies for negative $\alpha$.
\end{remark}
\begin{definition}
	When $P$ and $Q$ are probability distributions with finite support, we let
	\begin{equation}
		R(P,Q) = \sup \left\{r\in \R_{\ge 0} \big| \text{ for large } n\ P^{\otimes n} \preceq Q^{\otimes \lfloor nr\rfloor} \right\}.
	\end{equation}
\end{definition}
When $Q$ is the trivial probability distribution $R(P,Q) = \infty$.
\begin{theorem}\label{th1}
	Given finitely supported probability distributions $P$ and $Q$, with $Q$ non-trivial,
	\begin{equation}
		R(P,Q) = \min_{\alpha\in [0,\infty]} \frac{H_\alpha(P)}{H_\alpha(Q)}.
	\end{equation}
\end{theorem}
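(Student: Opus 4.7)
The plan is to prove $R(P,Q)\ge r^*$ and $R(P,Q)\le r^*$ separately, where $r^*:=\min_{\alpha\in[0,\infty]}H_\alpha(P)/H_\alpha(Q)$. Since $Q$ is non-trivial we have $H_\alpha(Q)>0$ for every $\alpha$, so $r^*$ is well-defined. If $P$ is trivial then $H_\alpha(P)=0$ for every $\alpha$, giving $r^*=0$; also $R(P,Q)=0$ because a point mass $P^{\otimes n}$ cannot be majorized by the non-trivial distribution $Q^{\otimes m}$ for any $m\ge 1$. So I may assume $P$ is non-trivial.

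For the lower bound, fix $r<r^*$ and a rational $r'=k/m$ with $r<r'<r^*$. By additivity of Rényi entropies under tensor product, $\min_{\alpha\in[0,\infty]}H_\alpha(P^{\otimes m})/H_\alpha(Q^{\otimes k})=(m/k)\,r^*=r^*/r'>1$, so Proposition \ref{prop3} applied to the pair $(P^{\otimes m},Q^{\otimes k})$ yields $P^{\otimes jm}\preceq Q^{\otimes jk}$ for all sufficiently large $j$. For arbitrary large $n$ I set $j=\lfloor n/m\rfloor$. Using that the point mass $\delta^{\otimes(n-jm)}$ majorizes $P^{\otimes(n-jm)}$ and that majorization is preserved under tensor products, $P^{\otimes n}=P^{\otimes jm}\otimes P^{\otimes(n-jm)}\preceq P^{\otimes jm}\otimes \delta^{\otimes(n-jm)}$, whose sorted profile coincides with that of $P^{\otimes jm}$; hence $P^{\otimes n}\preceq P^{\otimes jm}\preceq Q^{\otimes jk}$. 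Because $jk\ge nr'-k\ge\lfloor nr\rfloor$ once $n(r'-r)\ge k+1$, and because $Q^{\otimes a}\preceq Q^{\otimes b}$ whenever $a\ge b$ (same delta-padding argument using non-triviality of $Q$), I conclude $P^{\otimes n}\preceq Q^{\otimes \lfloor nr\rfloor}$. Thus $R(P,Q)\ge r$ for every $r<r^*$, and letting $r\nearrow r^*$ gives $R(P,Q)\ge r^*$.

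For the upper bound, suppose toward a contradiction that $R(P,Q)>r^*$. The achievability set is downward closed (again by monotonicity of $Q^{\otimes\cdot}$ with respect to $\preceq$), so I may pick $r\in(r^*,R(P,Q))$ with $P^{\otimes n}\preceq Q^{\otimes\lfloor nr\rfloor}$ for all sufficiently large $n$. By Proposition \ref{prop5} applied at each such $n$, either the sorted distributions coincide or the strict inequality $H_\alpha(P^{\otimes n})>H_\alpha(Q^{\otimes\lfloor nr\rfloor})$ holds for every $\alpha\in(0,\infty)$; in either case $nH_\alpha(P)\ge\lfloor nr\rfloor H_\alpha(Q)$. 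Dividing by $nH_\alpha(Q)>0$ and letting $n\to\infty$ gives $H_\alpha(P)/H_\alpha(Q)\ge r$ for all $\alpha\in(0,\infty)$, and continuity of $H_\alpha$ in $\alpha$ at $0$ and $\infty$ extends this bound to $[0,\infty]$. So $r^*\ge r>r^*$, a contradiction.

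The main obstacle is the bookkeeping in the lower bound: Proposition \ref{prop3} produces majorization statements only along the arithmetic progression $n=jm$ with right-hand side $Q^{\otimes jk}$, whereas the definition of $R(P,Q)$ demands them for every large $n$ with right-hand side $Q^{\otimes\lfloor nr\rfloor}$. The delta-padding trick handles off-lattice $n$, and choosing the intermediate rational rate $r'>r$ provides the slack needed to absorb the rounding error in $\lfloor nr\rfloor$.
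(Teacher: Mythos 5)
Your proof is correct and follows essentially the same route as the paper: achievability via Proposition \ref{prop3} and the converse via Proposition \ref{prop5} (strict Schur concavity of the R\'enyi entropies), combined with additivity $H_\alpha(P^{\otimes n})=nH_\alpha(P)$. The one place you go beyond the paper is in making rigorous its direct citation of Proposition \ref{prop3} with mismatched exponents $n$ and $\lfloor nr\rfloor$: your reduction through a rational rate $r'=k/m$, the point-mass padding, and the monotonicity of $Q^{\otimes a}$ under $\preceq$ in $a$ supplies exactly the bookkeeping the paper's one-line argument glosses over.
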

\begin{proof}
	Let $r< \min_{\alpha\in [0,\infty]} \frac{H_\alpha(P)}{H_\alpha(Q)}$. Then for large $n$
	\begin{equation}
			\min_{\alpha\in [0,\infty]}\frac{H_\alpha(P^{\otimes n})}{H_\alpha(Q^{\otimes \lfloor nr\rfloor})}
			=
			\min_{\alpha\in [0,\infty]}
			\frac{n}{\lfloor nr\rfloor}\frac{H_{\alpha}(P)}{H_{\alpha}(Q^)}
			>1.
	\end{equation}
	By Proposition \ref{prop3}, $P^{\otimes n}\preceq Q^{\otimes \lfloor nr\rfloor} $.
	\\\\
	Let $r> \min_{\alpha\in [0,\infty]} \frac{H_\alpha(P)}{H_\alpha(Q)}$ and choose some ${\alpha_r}$ such that $r>\frac{H_{\alpha_r}(P)}{H_{\alpha_r}(Q)}$. Then for large $n$
	\begin{equation}
	\frac{H_{\alpha_r}(P^{\otimes n})}{H_{\alpha_r}(Q^{\otimes \lfloor nr\rfloor})}
	=
	\frac{n}{\lfloor nr\rfloor}\frac{H_{\alpha_r}(P)}{H_{\alpha_r}(Q^)}
	<1.
	\end{equation}
	By Proposition \ref{prop5} $P^{\otimes n}\npreceq Q^{\otimes \lfloor nr\rfloor} $.
\end{proof}
\paragraph{Acknowledgements.}
I acknowledge financial support from the European Research Council (ERC Grant Agreement no. 337603) and VILLUM FONDEN via the QMATH Centre of Excellence (Grant no. 10059).
\\
Furthermore I would like to thank Péter Vrana for stimulating discussions.
\bibliographystyle{ieeetr}
\bibliography{all}
\end{document}